\begin{document}
\title{Sync or Fork: Node-Level Synchronization Analysis of Blockchain}
%
%
\author{Qin Hu\inst{1} \and
Minghui Xu\inst{2} \and
Shengling Wang\inst{3(*)} \and Shaoyong Guo\inst{4}}
\authorrunning{Q. Hu et al.}
%
\institute{
Indiana University - Purdue University Indianapolis, USA
\and
The George Washington University, USA \and
Beijing Normal University, China \and
Beijing University of Posts and Telecommunications, China \\
* Corresponding author \\
Email: \email{qinhu@iu.edu, mhxu@gwu.edu, wangshengling@bnu.edu.cn, syguo@bupt.edu.cn}}
\maketitle              
\begin{abstract}
As the cornerstone of blockchain, block synchronization plays a vital role in maintaining the security. Without full blockchain synchronization, unexpected forks will emerge and thus providing a breeding ground for various malicious attacks. The state-of-the-art works mainly study the relationship between the propagation time and blockchain security at the systematic level, neglecting the fine-grained impact of peering nodes in blockchain networks. To conduct a node-level synchronization analysis, we take advantage of the large deviation theory and game theory to study the pull-based propagation from a microscopic perspective. We examine the blockchain synchronization in a bidirectional manner via investigating the impact of full nodes as responders and that of partial nodes as requesters. Based on that, we further reveal the most efficient path to speed up synchronization from full nodes and design the best synchronization request scheme based on the concept of correlated equilibrium for partial nodes. Extensive experimental results demonstrate the effectiveness of our analysis.

\keywords{ Block synchronization \and Large deviation theory \and Game theory \and Correlated equilibrium.}
\end{abstract}

\section{Introduction}
Since the appearance of Bitcoin \cite{nakamoto2019bitcoin}, 
 cryptocurrency as the killer application of blockchain piques substantial attention from the whole society to the underlying distributed ledger technology. Research on blockchain from all walks of life indicates its great potential and versatility. It is reported that the blockchain market size over the globe reaches \$3 billion in 2020 and is expected to surge to \$39.7 billion by 2025 \cite{market}.

As the infrastructure of blockchain systems, the peer-to-peer network consisting of peering nodes supports the most important operations of information dissemination and exchange, including both control and data messages. 
To maintain the consistent recognition of the main chain, 
the synchronization of newly generated blocks among all nodes becomes extremely important. Otherwise, unexpected forks will emerge, which might be exploited by malicious clients to achieve various attacks, such as double spending and selfish mining, and can even lead to the breakdown of a blockchain system. 

To enable the synchronization of blockchain, there exist five types of block propagation mechanisms \cite{gervais2016security}, i.e., advertisement, header sending, unsolicited push, relay network, and push-advertisement hybrid. 
Focusing on the behaviors of nodes propagating block information, 
we can summarize them as 
pull-based and push-based. 
In the pull-based propagation, nodes with timely information of the blockchain, termed as full nodes, respond to the requests of updating block information from neighboring nodes, named as partial nodes, which can achieve block synchronization  cost-efficiently in an on-demand manner.  
While in the push-based one, any node receiving the newly generated block automatically pushes this piece of information to neighbors, which can synchronize the blockchain network quickly but will cause unnecessary communication among nodes. 
Other works about blockchain synchronization mainly study the relationship between the propagation time and blockchain security at the systematic level \cite{decker2013information,sompolinsky2015secure,gobel2016bitcoin,kang2018incentivizing}, neglecting the fine-grained impact of peering nodes in blockchain networks.

In this paper, we study the blockchain synchronization in pull-based propagation from a microscopic perspective, using the large deviation theory and game theory to investigate different roles of peering nodes in synchronizing block information. 
This suggests the feature of \textit{node-level} analysis of this work. Besides, our research is \textit{bidirectional}, which captures the feature of impacts on block synchronization from two main types of nodes in the blockchain, i.e., the full node as the responder and the partial node as the requester. 
Specifically, we reveal  clues about three critical questions: \textit{How will the full node's response capability affect the synchronization? How to efficiently reduce its negative effect on synchronization? And how should the partial node to actively achieve the synchronization?}

In summary, our contributions in this work include the following three aspects:
\begin{itemize}
\item The impact of full nodes on synchronization is quantitatively characterized by the concept of response failure rate, which straightforwardly uncovers the synchronization probability of connected partial nodes.
\item The negative impact of full nodes on synchronization can be fast eliminated via increasing the decay speed of the response failure rate, and the derived expression of the decay speed indicates that enlarging the response capacity related parameter is more efficiently than improving the response rate. This paves a clear path to facilitate synchronization from full nodes.
\item The optimal synchronization scheme for the partial node is established based on the concept of correlated equilibrium, where a Node Synchronization (NS) problem is formulated to guarantee 
that the partial node can get synchronized without unnecessary cost or redundant response from full nodes.
\end{itemize}

The remaining of this paper is organized as follows. In section \ref{sec:related}, we investigate the most related work on blockchain synchronization. Then we introduce the node-level synchronization model for both the full node and the partial node in Section \ref{sec:formulation}, where the full node's response capability is further analyzed in Section \ref{sec:analysis} while the best synchronization request mechanism for the partial node is presented in Section \ref{sec:solution}. All theoretical analysis are evaluated in Section \ref{sec:experiment}. And finally, we conclude the whole paper in Section \ref{sec:conclusion}.

\section{Related Work}\label{sec:related}
Similar to traditional distributed systems, there are three levels of synchrony of blockchain networks,  
namely synchronous, partially synchronous, and asynchronous. As the representative blockchain application, Bitcoin whitepaper \cite{nakamoto2019bitcoin} provides an initial analysis on its security against forks and double-spending attacks with an oversimplified model. Since 2015, Bitcoin consensus algorithm has been thoroughly investigated considering three levels of synchrony \cite{garay2015bitcoin}\cite{pass2017analysis}. Garay \textit{et al.} \cite{garay2015bitcoin} formalize the Bitcoin consensus within a fully synchronous network. 
Persistence and liveliness are proved to be guaranteed hinging on the synchronous setting. Pass \textit{et al.} \cite{pass2017analysis} show that Bitcoin consensus satisfies consistency and liveliness in a partially synchronous network, but consistency cannot be satisfied in an asynchronous network. 

As the most critical factor affecting blockchain synchronization, the propagation time of control messages and data messages 
is investigated to reveal how it affects blockchain security against various attacks, such as forks, double spending, and selfish mining, and how to mitigate the corresponding vulnerabilities. 
The propagation time 
is shown as the primary cause for blockchain forks \cite{decker2013information}. In response, researchers propose three methods to speed up propagation: minimizing verification, pipelining block propagation, and increasing connectivity. 
Sompolinsky and Zohar \cite{sompolinsky2015secure} study the relation between higher transaction rate and the vulnerability to double-spending attacks, which shows that increasing block size and block generation rate can improve the throughput, but will increase the propagation time so that even weaker attackers can launch double-spending attacks. 
Besides, the selfish mining is investigated in a realistic setting where propagation time is taken into account \cite{gobel2016bitcoin}, indicating that it becomes easier with increasing propagation delay. 
For PoS-based consensus, Kang \textit{et al.} \cite{kang2018incentivizing} propose a Stackelberg game based incentive mechanism to encourage miners to propagate blocks, enabling lower propagation delay and higher security level. 

For the propagation mechanism in blockchain networks, five popular categories are summarized in \cite{gervais2016security}, 
including advertisement, header sending, unsolicited push, relay network, and push-advertisement hybrid. Early on, the advertisement-based propagation is adopted by Bitcoin, which has a two-round message exchange procedure. Afterward, Bitcoin resorts to the header propagation to avoid using $inv$ messages. In unsolicited push propagation, miners directly broadcast newly-mined blocks. 
The relay network, adopted by FIBRE \cite{BitcoinRelay}, BloXroute\cite{klarman2018bloxroute}, and Geeqchain \cite{conley2018geeq}, is to distribute relay nodes globally to which miners can connect to and exchange information at a high speed. However, relay nodes are criticized for introducing centralization to blockchain. Ethereum adopts push and advertisement hybrid propagation by which a node can automatically push messages to $\sqrt{n}$ nodes and advertises messages to neighboring nodes simultaneously \cite{wust2016ethereum}. 

In summary, existing works about blockchain synchronization focus on macroscopically investigating blockchain protocol to figure out the relationship between propagation time and security or propose new propagation mechanisms. However, in this paper, we study the blockchain synchronization from a microscopic and node-level perspective, using the large deviation theory and game theory to depict blockchain nodes precisely and investigate how nodes' capability affect synchronization.

\section{System Model}\label{sec:formulation}

%
%


In this paper, 
we assume that full nodes are homogeneous in terms of information request and response performance. Thus, we can shed light on the synchronization status of the whole blockchain system via studying the response capability of any specific full node. 
And all partial nodes are also assumed to be similar in terms of interacting with full nodes to get synchronized. As full nodes and partial nodes play different roles in blockchain synchronization, we introduce their models separately in the following.

\subsection{Response Model of the Full Node}
Considering that the requests of updating block information from partial nodes arrive at the full node randomly, we assume that this stochastic event is a Poisson process with arrival rate $\lambda$, which is inspired by the typical model of packet arrival process in communication networks \cite{cao2001internet}. It usually takes some time for the full node to respond and send out the latest block information since the node might be busy on handling other tasks, which can also be assumed as a Poisson process with response rate $\mu$.
To guarantee that the full node can finish responding to the requests from partial nodes most of the time, we assume $\mu > \lambda$. However, even with this condition, there might still exist some cases where the full node fails to respond.

To investigate this issue, we define the number of synchronization requests arrived at the full node and that the node can respond during time period $(t-1,t)$ as $a_t$ and $r_t$, respectively, where $t \in \mathbb{N}^*$. Then we can describe the request queue at the full node as 
\begin{equation}
Q_t = (Q_{t-1}+a_t-r_t)^+, \nonumber
\end{equation}
where $(\cdot)^+$ denotes the positive part of the inside expression.

Next, we focus on the cumulative arrival and response process, denoted as $A_t=a_1+\cdots + a_t$ and $R_t=r_1 + \cdots + r_t$, respectively. Thus, the length of the request queue until time $t$ at the full node, defined as $L_t$, will be 
\begin{equation}\label{eq:Lt}
L_t =  A_t - R_t.
\end{equation}
Generally speaking, since $\mu > \lambda$, one may expect that $L_t$ would be negative, making it pointless with the definition of queue length. However, due to the randomness of the arrival and response process, the queue length can become positive, which may even overwhelm the response capability of the full node, leading to the failure of responding synchronization requests. 
To prepare for the worst case of response failure in blockchain, we focus on the maximum possible queue length at the full node when $t\rightarrow \infty$, which is defined as $\mathcal{L}=\sup_{t>0} L_t$, and further investigate 
the possibility of the request queue being over-length, i.e., $\mathcal{L}>\Gamma$, where $\Gamma$ is defined as follows:

\begin{definition}[Response capacity]\label{def:capacity}
The response capacity $\Gamma$ of the full node is the longest queue of synchronization requests that it can process without any failure.
\end{definition}


According to Definition \ref{def:capacity}, we can know that if $\mathcal{L} \leq\Gamma$, the fulll node can handle all synchronization requests successfully. 
But if $\mathcal{L}>\Gamma$, the request queue is too long for the full node to handle, which will make the partial nodes sending block synchronization requests fail to achieve the distributed consistency. 
To analyze this important event, we introduce the following definition:

\begin{definition}[Response failure rate]
The response failure rate is the probability that the longest synchronization request queue arrived at the full node, i.e., $\mathcal{L}$, exceeds its response capacity $\Gamma$, denoted as $P(\mathcal{L}>\Gamma)$.
\end{definition}

With the help of $P(\mathcal{L}>\Gamma)$, we can capture the full node's failure of responding to block synchronization requests in a quantitative manner, which provides us a more straightforward clue about the synchronization status of the neighboring partial nodes. 
Based on this index, we can make adjustment or countermeasure in time to avoid unpredictable loss brought by the asynchronous blockchain information among partial nodes, which will be analyzed in Section \ref{sec:analysis}. 

\subsection{Synchronization Model of the Partial Node}
We assume that the number of partial nodes in a blockchain network is $N$, and each of them has direct access to multiple full nodes to obtain block synchronization information. Specifically, for any partial node, we denote the set of full nodes it has direct connections as $\mathcal{M} = \{M_i\},~i \in \{1,\cdots,m\}$, where $m \in \mathbb{N}^*$ is the number of full nodes. 
And the above-defined response failure rate of these full nodes can be denoted as $P_i,~i \in \{1,\cdots,m\}$.

For a cautious partial node, it may send the synchronization request to all connected full nodes so as to obtain a higher successful synchronization probability. Thus, the synchronization failure event can only happen to this partial node when all full nodes failed to respond with the latest block information, which means that the synchronization failure probability is $\prod_{i=1}^m P_i$. And accordingly, the successful synchronization probability of this partial node is $1-\prod_{i=1}^m P_i$.

While in a more general case, a normal partial node might need to seriously consider where to send the synchronization request. First, sending the request costs communication resource, and thus generously sending the request to all available full nodes can bring too much burden on the resource consumption for the partial node. What's more, with the assumption that all full nodes have the same new information of the blockchain, it would be enough for the partial node to receive at least one response and thus other redundant responses become a waste. With this in mind, we can see that wisely sending the synchronization request is vital for the partial node, which will be elaborated in Section \ref{sec:solution}.

\section{Response Failure Analysis}\label{sec:analysis}
In this section, we analyze the response failure rate $P(\mathcal{L}>\Gamma)$ in detail. We first focus on the derivation of its decay speed $I(x)$, based on which two critical factors impacting the systematic response are discussed.

We first let $\Gamma = lx$ with $x>0$. Then according to the Cram$\acute{\mathrm{e}}$r's  theorem \cite{ganesh2004big}, for large $l$, there exists $P(\mathcal{L}>\Gamma) = P(\mathcal{L}>lx) \approx exp(-l I(x))$, which indicates that the probability of $\mathcal{L}>lx$ will decay exponentially with the rate $I(x)$ when $l \rightarrow \infty$. 
In detail, we have
\begin{equation}
\lim_{l\rightarrow \infty} \frac{1}{l} \log P(\mathcal{L}>lx) = -I(x),
\end{equation}
where $I(x)$ is the rate function with the following expression 
\begin{align}\label{eq:Ix}
I(x) = & \inf_{t>0} t \Phi^* (\frac{x}{t}).
\end{align}
According to the large deviation theory and the calculation process in \cite{wang2019corking}, we can have the expression of $I(x)$ as: 
\begin{equation}\label{eq:final_Ix}
I(x) = x \ln \frac{\mu}{\lambda}.
\end{equation}

As we mentioned earlier, $I(x)$ reveals the decay speed of response failure rate $P(\mathcal{L}>\Gamma)$. In other words, the larger $I(x)$, the sharper decrease of $P(\mathcal{L}>\Gamma)$, and thus the more successful the block synchronization for the requested partial nodes. With this in mind, we desire to enlarge $I(x)$ as much as possible. 
On one hand, from the above expression of $I(x)$, one can tell that it is linearly increasing with the response capacity related parameter $x$ when the ratio of the response rate $\mu$ to the arrival rate $\lambda$ is fixed. On the other hand, if $x$ is given, we can see $I(x)$ is logarithmically correlated to $\frac{\mu}{\lambda}$. 
Therefore, theoretically speaking, increasing $x$ is more effective to improve $I(x)$ than increasing $\mu/\lambda$, which will be numerically analyzed in Section \ref{sec:experiment}.

Considering that $x$ and $\frac{\mu}{\lambda}$ are two main factors impacting the value of $I(x)$, we study them further in the following. As $x$ is based on the response capacity $\Gamma$ and the arrival rate $\lambda$ is a system-wide parameter which cannot be  adjusted, we mainly focus on $\Gamma$ and $\mu$ since they are more controllable from the perspective of the full node. 
In the following, we investigate how to set $\Gamma$ and $\mu$ to meet some specific system-performance requirements on response failure rate. To this end, we first denote a \textit{response failure tolerance degree} as $\epsilon \in (0,1]$, which acts as the constraint for the failure rate $P(\mathcal{L}>\Gamma)$. And then we introduce the following two definitions.

\begin{definition}[Effective response capacity]\label{def:effective_syn}
The effective response capacity $\Gamma^*(\epsilon)$ is the minimum capacity that the full node needs to provide to enforce that the response failure rate will never greater than $\epsilon$, i.e.,
\begin{equation}
\Gamma^*(\epsilon) = \min \{\Gamma: P(\mathcal{L}>\Gamma) \leq \epsilon\}. \nonumber
\end{equation}
\end{definition}

\begin{definition}[Effective response rate]\label{def:effective_rate}
The effective response rate $\mu^*(\epsilon)$ is the minimum response rate requirement for the full node to guarantee that the response failure rate will never greater than $\epsilon$, i.e.,
\begin{equation}
\mu^*(\epsilon) = \min \{\mu: P(\mathcal{L}>\Gamma) \leq \epsilon\}. \nonumber
\end{equation}
\end{definition}

Further, we have the following theorems to present the specific results of $\Gamma^*(\epsilon)$ and $\mu^*(\epsilon)$.

\begin{theorem}\label{thrm:gamma}
For $\epsilon \in (0,1]$ and $\mu > \lambda$, we can calculate $\Gamma^*(\epsilon) $ as:
\begin{equation}
\Gamma^*(\epsilon)  = - \frac{\ln \epsilon}{\ln \frac{\mu}{\lambda}}. \nonumber
\end{equation} 
\end{theorem}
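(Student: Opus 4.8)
The plan is to leverage the closed-form rate function obtained in \eqref{eq:final_Ix} together with the Cram\'er-type approximation $P(\mathcal{L}>\Gamma)\approx\exp(-lI(x))$, and then solve the minimization in Definition~\ref{def:effective_syn} explicitly. The key observation that makes everything tractable is that once the rate function is known in closed form, the failure rate becomes a function of $\Gamma$ alone.

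First I would substitute $I(x)=x\ln\frac{\mu}{\lambda}$ into the large-deviation approximation. Since $\Gamma=lx$, the exponent collapses: $lI(x)=lx\ln\frac{\mu}{\lambda}=\Gamma\ln\frac{\mu}{\lambda}$, so the response failure rate takes the clean form
\begin{equation}
P(\mathcal{L}>\Gamma)\approx\exp\!\left(-\Gamma\ln\frac{\mu}{\lambda}\right). \nonumber
\end{equation}
The payoff of this step is that the dependence on the auxiliary scaling variables $l$ and $x$ disappears entirely, leaving a single-variable expression to work with.

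Next I would exploit monotonicity. Because $\mu>\lambda$ forces $\ln\frac{\mu}{\lambda}>0$, the right-hand side above is strictly decreasing in $\Gamma$. Hence the feasibility constraint $P(\mathcal{L}>\Gamma)\leq\epsilon$ defines a half-line in $\Gamma$, and the minimum capacity in Definition~\ref{def:effective_syn} is attained exactly at the boundary where $P(\mathcal{L}>\Gamma)=\epsilon$. I would therefore set $\exp(-\Gamma\ln\frac{\mu}{\lambda})=\epsilon$ and take logarithms of both sides to get $-\Gamma\ln\frac{\mu}{\lambda}=\ln\epsilon$, which rearranges to the claimed expression $\Gamma^*(\epsilon)=-\frac{\ln\epsilon}{\ln\frac{\mu}{\lambda}}$. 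A final sanity check confirms well-posedness: since $\epsilon\in(0,1]$ gives $\ln\epsilon\leq0$ while $\ln\frac{\mu}{\lambda}>0$, the resulting $\Gamma^*(\epsilon)$ is nonnegative, as a queue-length capacity must be.

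The computation itself is routine, so the only genuinely load-bearing step is the monotonicity argument that converts the ``minimum $\Gamma$ with $P\leq\epsilon$'' in the definition into the single equation $P=\epsilon$; everything else is algebraic manipulation of the exponential. I expect no serious obstacle here, precisely because all the heavy lifting---the large-deviation principle and the evaluation of the rate function---has already been carried out to produce \eqref{eq:final_Ix}.
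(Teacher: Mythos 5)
Your proposal is correct and follows essentially the same route as the paper's own proof: apply the Cram\'er approximation $P(\mathcal{L}>\Gamma)\approx e^{-lI(x)}$, substitute the closed-form rate function \eqref{eq:final_Ix} together with $\Gamma = lx$ to collapse the exponent to $\Gamma\ln\frac{\mu}{\lambda}$, and solve the constraint $P(\mathcal{L}>\Gamma)\leq\epsilon$ from Definition~\ref{def:effective_syn}. Your explicit monotonicity argument (that the minimum is attained at the boundary $P=\epsilon$) is a detail the paper leaves implicit, but it is the same underlying argument.
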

\begin{proof}
As we mentioned at the beginning of this section, for $l \rightarrow \infty$, we have $P(\mathcal{L}>\Gamma) \approx e^{-lI(x)}$. Then it comes to $e^{-lI(x)}\leq \epsilon$ according to Definition \ref{def:effective_syn}. 
Besides, based on \eqref{eq:final_Ix} and $\Gamma = lx$, we can prove that the value of $\Gamma^*(\epsilon)$ is $- \frac{\ln \epsilon}{\ln \frac{\mu}{\lambda}}$.
\end{proof}

\begin{theorem}
For $\epsilon \in (0,1]$ and $\mu > \lambda$, we can calculate $\mu^*(\epsilon) $ as:
\begin{equation}
\mu^*(\epsilon)  = \lambda e^{- \frac{\ln \epsilon}{\Gamma}}. \nonumber
\end{equation} 
\end{theorem}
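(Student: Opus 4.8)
The plan is to mirror the argument used for $\Gamma^*(\epsilon)$ in Theorem \ref{thrm:gamma}, but now treating $\mu$ as the free variable while holding $\Gamma$ fixed. First I would invoke the large-deviation approximation $P(\mathcal{L}>\Gamma) \approx e^{-lI(x)}$, valid as $l \rightarrow \infty$, and eliminate the auxiliary scaling through the relation $\Gamma = lx$ together with the closed-form rate function in \eqref{eq:final_Ix}. Substituting $I(x) = x \ln(\mu/\lambda)$ gives $lI(x) = (lx)\ln(\mu/\lambda) = \Gamma \ln(\mu/\lambda)$, so the response failure rate collapses to the clean expression $P(\mathcal{L}>\Gamma) \approx e^{-\Gamma \ln(\mu/\lambda)} = (\lambda/\mu)^{\Gamma}$, in which $\mu$ now appears explicitly and the scaling index $l$ has disappeared.

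Next I would impose the defining constraint from Definition \ref{def:effective_rate}, namely $P(\mathcal{L}>\Gamma) \leq \epsilon$, which becomes $e^{-\Gamma \ln(\mu/\lambda)} \leq \epsilon$. Taking logarithms and using $\Gamma>0$ yields $\Gamma \ln(\mu/\lambda) \geq -\ln\epsilon$, hence $\ln(\mu/\lambda) \geq -\frac{\ln\epsilon}{\Gamma}$, and finally $\mu \geq \lambda\, e^{-\frac{\ln\epsilon}{\Gamma}}$. The point to handle carefully here is the sign bookkeeping: since $\epsilon \in (0,1]$ we have $\ln\epsilon \leq 0$, so the quantity $-\frac{\ln\epsilon}{\Gamma}$ is nonnegative and the resulting lower bound on $\mu$ is at least $\lambda$, consistent with the standing assumption $\mu > \lambda$.

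To pin down the minimizer rather than merely a feasible range, I would observe that $P(\mathcal{L}>\Gamma) = (\lambda/\mu)^{\Gamma}$ is strictly decreasing in $\mu$ for fixed $\Gamma>0$, so the smallest $\mu$ meeting the tolerance constraint is attained exactly at the boundary $P(\mathcal{L}>\Gamma) = \epsilon$. Solving this equality recovers $\mu^*(\epsilon) = \lambda\, e^{-\frac{\ln\epsilon}{\Gamma}}$, completing the argument. I do not expect any hard estimate to arise; the entire derivation is algebraic once \eqref{eq:final_Ix} is available. The only real subtlety, and hence the main thing to get right, is ensuring that the monotonicity-in-$\mu$ observation is stated cleanly so that the $\min$ in Definition \ref{def:effective_rate} is genuinely achieved, and that the inequality directions are tracked correctly through the logarithm given that $\ln\epsilon$ is nonpositive.
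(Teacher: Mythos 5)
Your proposal is correct and takes essentially the same route as the paper's own proof: the paper likewise starts from $P(\mathcal{L}>\Gamma) \approx e^{-lI(x)} \leq \epsilon$, substitutes the rate function \eqref{eq:final_Ix} and $\Gamma = lx$ to get $\Gamma \ln\frac{\mu}{\lambda} \geq -\ln\epsilon$, and solves for $\mu$. Your additional remarks on sign bookkeeping and the monotonicity of $(\lambda/\mu)^{\Gamma}$ in $\mu$ simply make explicit steps the paper leaves implicit.
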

\begin{proof}
Similar to the proof of Theorem \ref{thrm:gamma}, due to $P(\mathcal{L}>\Gamma) \approx e^{-lI(x)} \leq \epsilon$, we can have $lx \ln \frac{\mu}{\lambda} \geq -\ln \epsilon$, which leads to the result of $\mu^*(\epsilon)$.
\end{proof}

\section{Correlated Equilibrium based Node Synchronization Mechanism}\label{sec:solution}
As mentioned earlier, the synchronization of one certain partial node is collectively completed by the surrounding full nodes, which heavily depends on how many of them the partial node requests. In fact, each full node has a particular response capability with respect to the synchronization request, which is well captured by the response failure tolerance degree introduced in the above section, and it takes some cost for the partial node to send the synchronization request to a specific full node. For a reasonable and intelligent partial node, it is essential to work out an efficient and effective strategy to select the subset of full nodes as synchronization request targets. 
In other words, \textit{given different $\epsilon_i~(i \in \{1,\cdots,m\})$ of all connected full nodes, how should the partial node make decisions on whether to send the blockchain synchronization request to each of them?}

To solve this problem, we first define that the decision strategy of the partial node is $\mathbf{p}=(p_1,\cdots,p_m)$ with $p_i\in \{0,1\}$, where 0 (or 1) denotes not sending (or sending) the synchronization request to the full node $M_i$. From the perspective of the partial node, the ultimate goal of this decision is to guarantee that it can obtain the up-to-date information of the main chain from at least one full node. Thus, the profit of deciding whether to send the request to one specific full node $M_i$ is jointly affected by the decisions of sending to other full nodes, which can be defined as 
\begin{align*}
\phi_i (\mathbf{p}) = \frac{p_i(1-\epsilon_i)}{\sum_{j=1}^m p_j(1-\epsilon_i)}.
\end{align*}
Note that in the case of $\mathbf{p}=\mathbf{0}$, we define $\phi_i (\mathbf{p}) = 0$.

With $C_i$ denoting the cost of sending the request to $M_i$,  we can define the utility of this decision as
\begin{align}\label{eq:utility}
U_i (\mathbf{p})  = \alpha_i \phi_i	(\mathbf{p}) - p_i C_i,
\end{align}
where $\alpha_i > 0$ is a scalar parameter.

On one hand, as a utility-driven decision maker, the partial node desires to obtain an optimal utility for each individual decision about one specific full node, which is collectively affected by the decision vector $\mathbf{p}$ about all full nodes and can be described by the following game-theoretic concept named  \textit{correlated equilibrium}.

\begin{definition}[Correlated equilibrium]
Denote the strategy space as $\mathcal{V} = \{0,1\}$ with the size of $V=2$ and a probability distribution over the space $\mathcal{V}^m$ as ${G(\mathbf{p})}$. Then $G(\mathbf{p})$  is a correlated equilibrium if and only if $G(\mathbf{p})$ makes that for any decision $p_i,p'_i\in \mathcal{V}$, there exists
\begin{equation}
\sum_{\mathbf{p}_{-i}\in \mathcal{V}^{m-1}} G(p_i,\mathbf{p}_{-i}) \Big( U_i(p_i,\mathbf{p}_{-i}) - U_i(p'_i,\mathbf{p}_{-i}) \Big) \geq 0, \nonumber
\end{equation}
where $\mathbf{p}_{-i}=(p_1,\cdots,p_{i-1},p_{i+1},\cdots,p_n)$ denotes other decisions except for $p_i$. 
\end{definition}

The above definition implies that under the correlated equilibrium $G(\mathbf{p})$, there is no motivation for the partial node to deviate from the strategy $p_i$ about sending the request to $M_i$ given other strategies $\mathbf{p}_{-i}$. In other words, the partial node can only obtain the maximized utility with respect to the individual decision via selecting $p_i$ according to the decision vector $\mathbf{p}$ sampled from the correlated equilibrium  $G(\mathbf{p})$. It is obvious that there may exist various correlated equilibria  meeting the above-defined constraint.

On the other hand, the partial node cares about the overall utility of all decisions about all surrounding full nodes since it reflects the general synchronization status of this partial node, which can be calculated as $\sum_{\mathbf{p}\in \mathcal{V}^{m}} G(\mathbf{p}) \sum_{i=1}^m U_i(\mathbf{p})$. Therefore, we can obtain the best correlated equilibrium for the partial node considering the global optimization goal, which is summarized as the following Node Synchronization (NS) problem.

\textbf{NS Problem:}
\begin{align}
\max:~ & \sum_{\mathbf{p}\in \mathcal{V}^{m}} G(\mathbf{p}) \sum_{i=1}^m U_i(\mathbf{p}) \label{eq:objective} \\ 
\mathrm{s.t.}:~ & G(\mathbf{p})\geq 0, ~\forall \mathbf{p} \in \mathcal{V}^m, \label{eq:bigzero}\\
 &\sum_{\mathbf{p}\in \mathcal{V}^m} G(\mathbf{p})=1, \label{eq:equalone}\\
\sum_{\mathbf{p}_{-i}\in \mathcal{V}^{m-1}} G(p_i,\mathbf{p}_{-i}) \Big( & U_i(p_i,\mathbf{p}_{-i}) - U_i(p'_i,\mathbf{p}_{-i}) \Big) \geq 0,  \forall p_i,p'_i \in \mathcal{V}. \label{eq:correlated}
\end{align}

Obviously, the above NS problem is an optimization problem with respect to the variable $G(\mathbf{p})$, where the optimization object \eqref{eq:objective} is to maximize the overall expected utility for all decisions, constraint \eqref{eq:bigzero} is a natural requirement for the probability distribution, constraint \eqref{eq:equalone} refers to that the sum of all probability distribution is 1, and the last one \eqref{eq:correlated} is directly obtained from the definition of correlated equilibrium to achieve individual utility maximization. 
Besides, via scrutinizing the NS problem, one can find that it is exactly a linear programming problem with respect to the probability distribution $G(\mathbf{p})$. In fact, there exist a lot of efficient algorithms to solve the linear programming problem with polynomial time complexity, such as interior point and simplex-based algorithms.

\section{Experimental Evaluation}\label{sec:experiment}
In this section, we first numerically analyze the key factor impacting the response failure rate $P(\mathcal{L}>\Gamma)$ , i.e., the decay speed $I(x)$. 
 Further, the proposed correlated equilibrium based node synchronization mechanism is validated to demonstrate its effectiveness. 
Specifically, all experiments are carried out using a laptop running with 2.7 GHz Dual-Core Intel Core i5 processor and 8 GB memory. And for the sake of statistical confidence, we report average values of all experimental results via repeating each experiment for 20 times.

\subsection{Numerical Analysis of Response Failure Rate}
We first plot $I(x)$ changing with the response capacity related parameter $x$ and the response rate $\mu$ in Fig. \ref{fig:decay}. In particular, we use the difference between $\mu$ and $\lambda$, i.e., $\mu - \lambda$, to capture the impact of $\frac{\mu}{\lambda}$ in \eqref{eq:final_Ix} on $I(x)$ for easy understanding. Specifically, we set $x\in [0,1]$, $\lambda = 3$ and $\mu - \lambda \in [0,10]$. 

\begin{figure}[h]
\centering
\subfigure[]{
\begin{minipage}{0.45\linewidth}
\includegraphics[width=1\textwidth]{./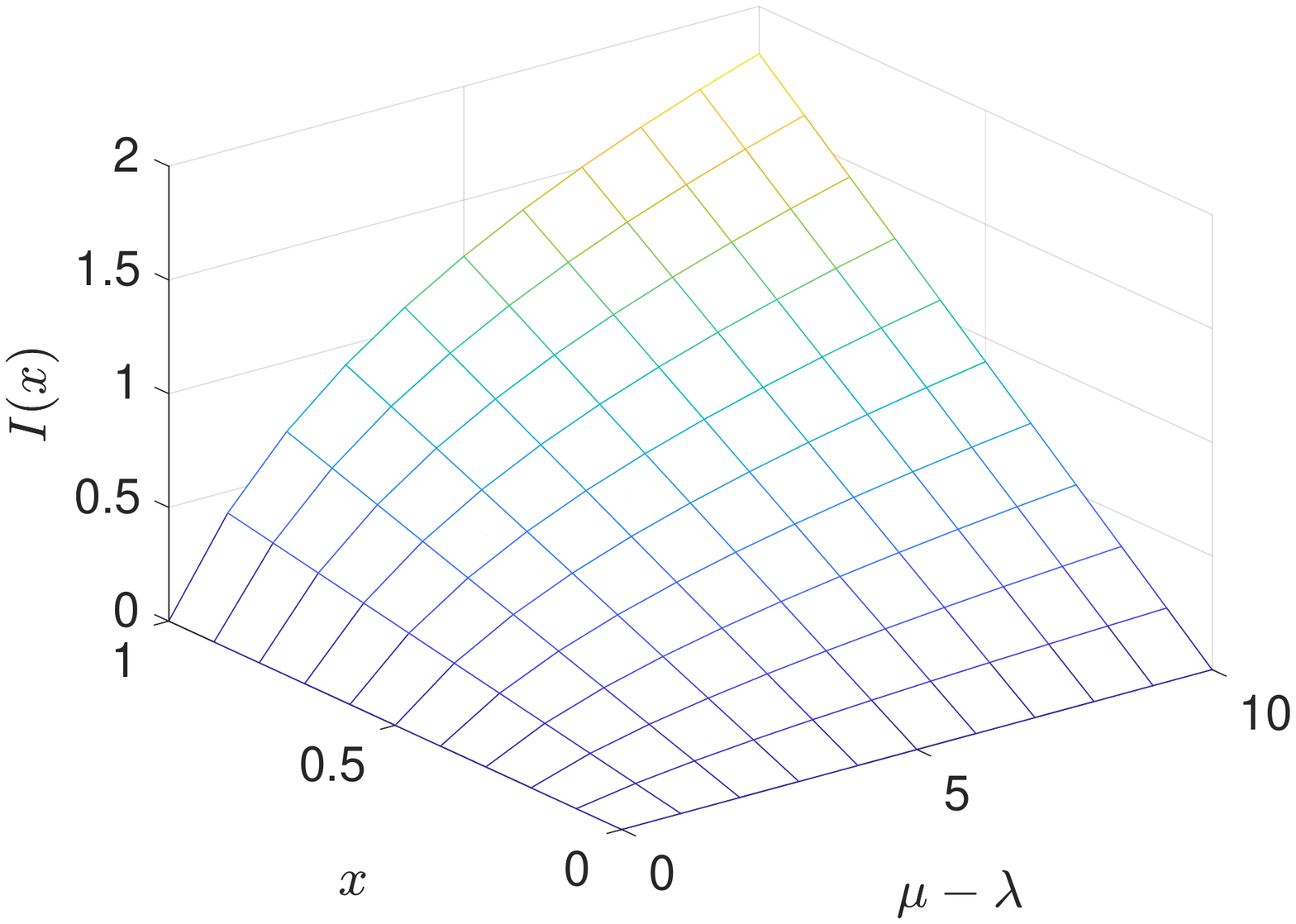}
\end{minipage}
}
\subfigure[]{
\begin{minipage}{0.23\linewidth}
\includegraphics[width=1\textwidth]{./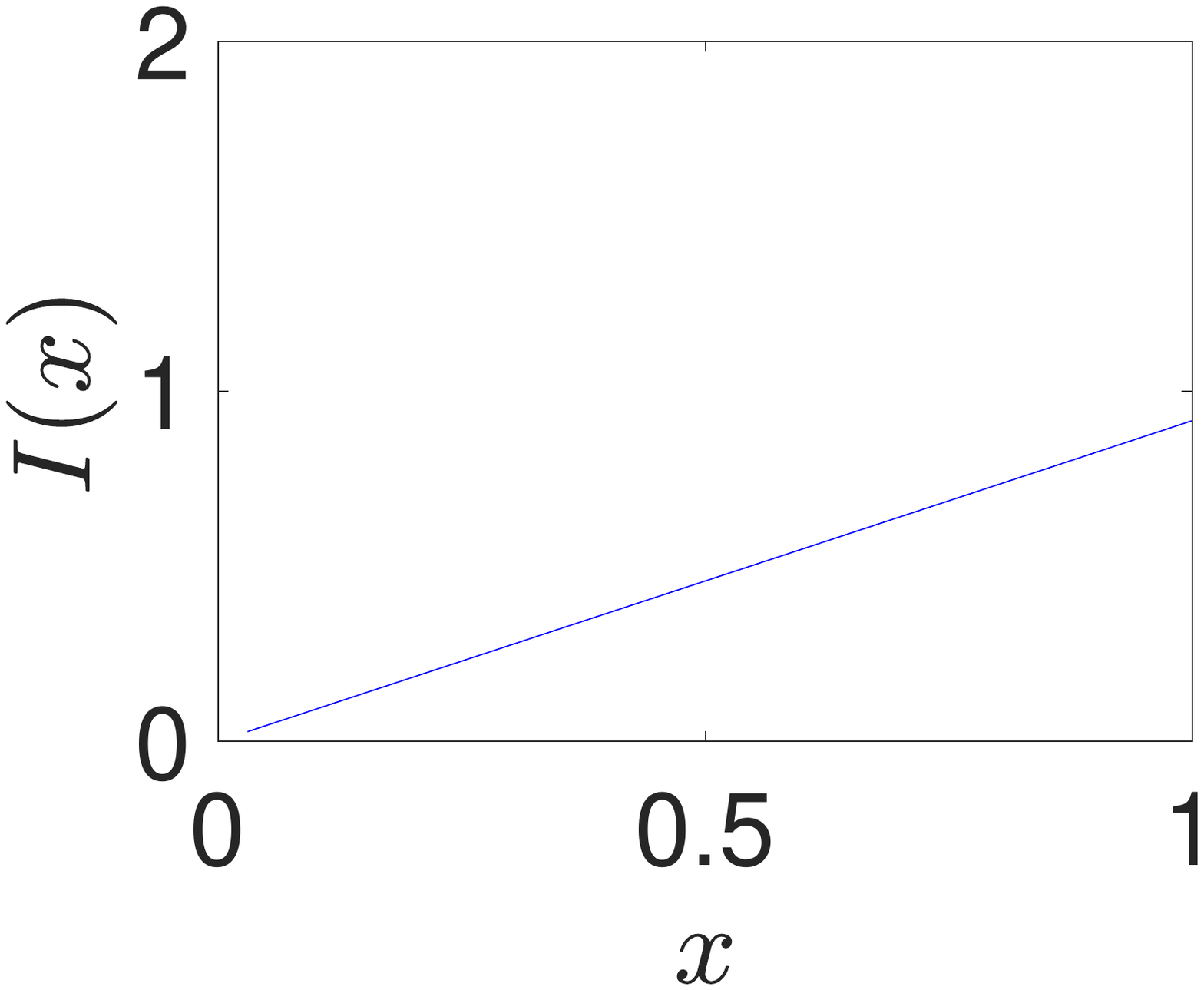}\\
\includegraphics[width=1\textwidth]{./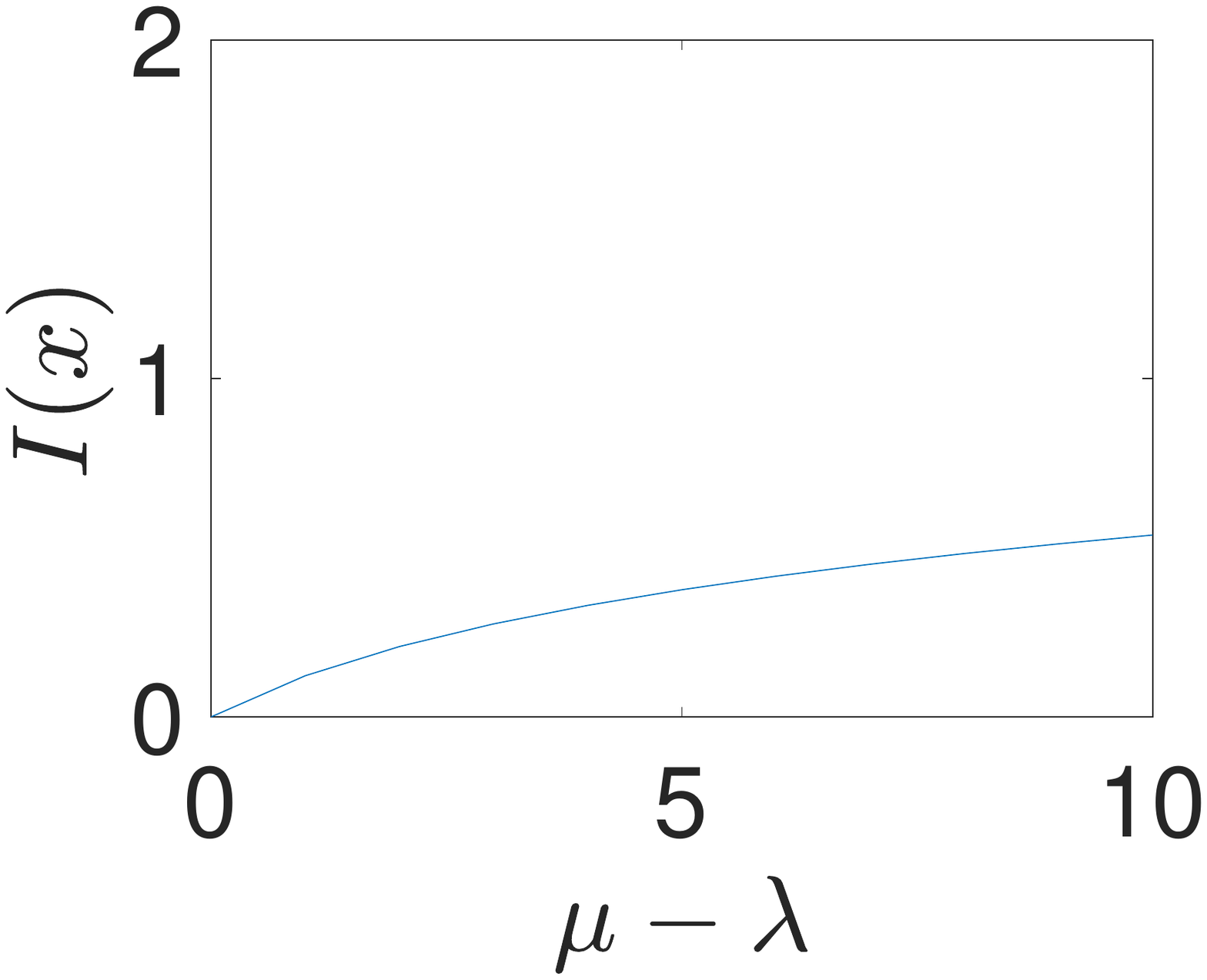}
\end{minipage}

\begin{minipage}{0.23\linewidth}
\includegraphics[width=1\textwidth]{./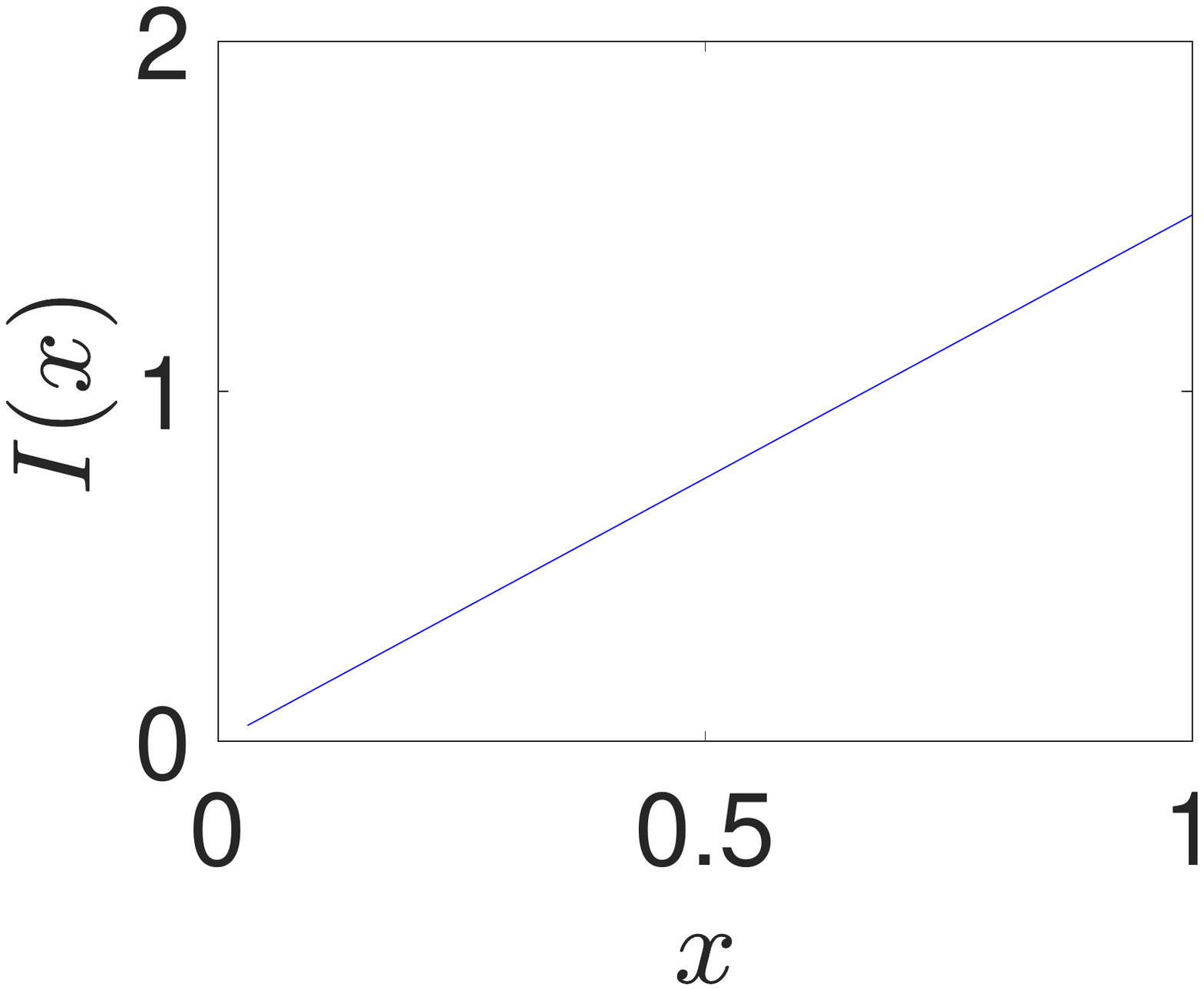}\\
\includegraphics[width=1\textwidth]{./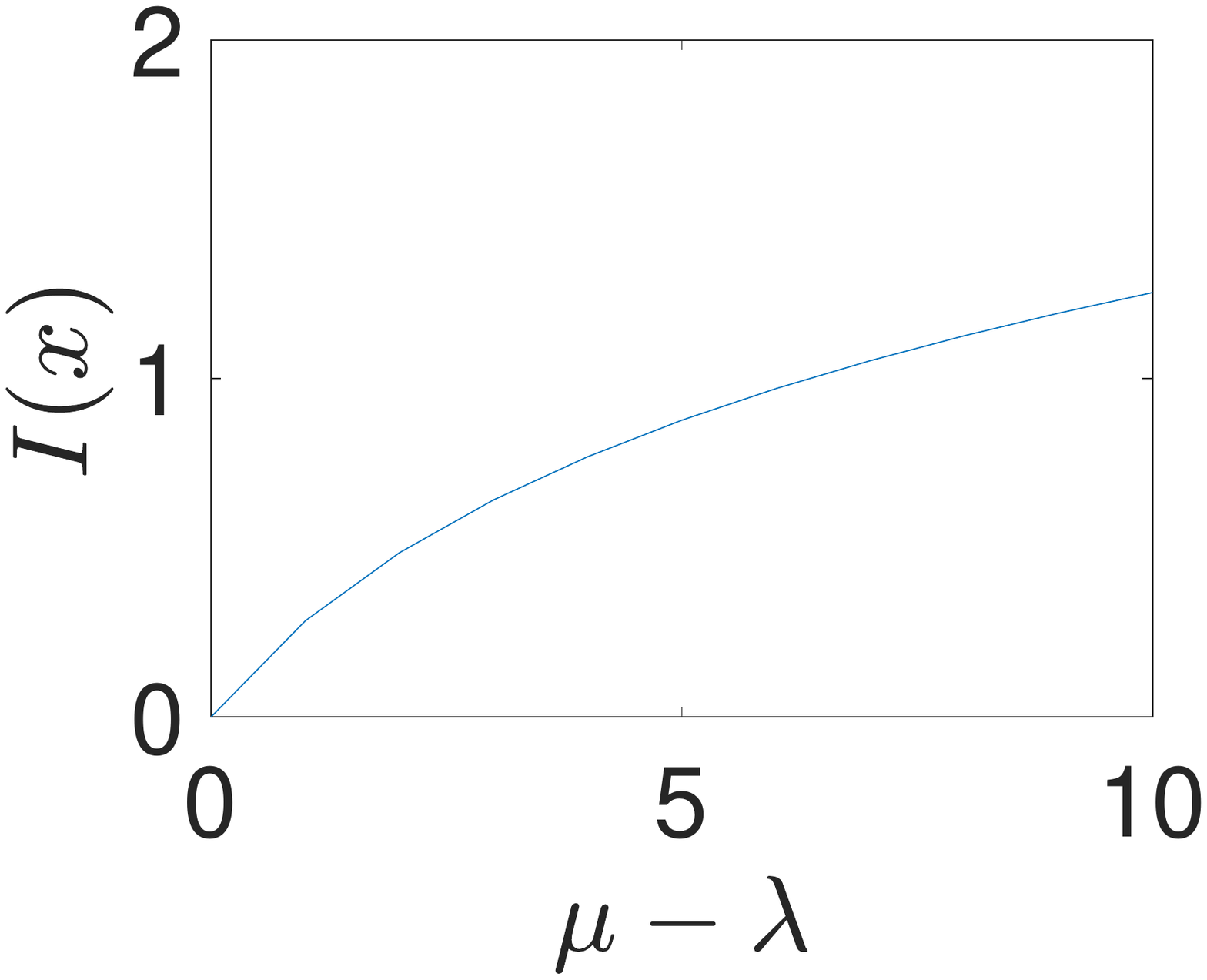}
\end{minipage}
}
\centering
\caption{Decay speed of the synchronization failure rate.}
\label{fig:decay}
\end{figure}

It can be seen that $I(x)$ increases with the larger $x$ and $\mu - \lambda$, which means that we can achieve a higher decay speed for the response failure rate via either improving the response capacity $\Gamma=lx$ or enhancing the full node's response rate $\mu$ given a specific arrival rate $\lambda$. Besides, via comparing the first and second lines of subfigures in Fig. \ref{fig:decay}(b), one can tell that the decay speed has different changing trends with respect to $x$ and $\mu - \lambda$, where the increasing $x$ can lead to linear change while the increase of $\mu - \lambda$ can only bring logarithmic variation. Thus, we can conclude that raising the response capacity can achieve a lower response failure rate more efficiently.

\subsection{Evaluation of Node Synchronization Mechanism}
Next, we explore the effectiveness of our proposed node synchronization mechanism in Section \ref{sec:solution}. In detail, we take the case of $m=8$ as an example and focus on the decision of sending the synchronization request to the full node $M_1$ who has a varying response failure tolerance degree $\epsilon_1 \in (0,1)$ with an interval of 0.1. Other parameters are set as $\alpha_i = 10, C_i = 5$. The request sending decisions are reported in Fig. \ref{fig:decision} with two representative cases, where the response failure tolerance degrees of all other full nodes, i.e., $M_2$ to $M_m$, are the same and fixed as $\epsilon_{-1}=0.2$ and 0.8. 
It is obvious that the request decision vectors in two cases are very different. In the case of $\epsilon_{-1}=0.2$ in Fig. \ref{fig:decision}(a), 
$p_1$ is 1 until $\epsilon_1$ is larger than others, which means that sending the request to $M_i$ is a good choice until its response failure rate is higher than others. And similarly, when $\epsilon_{-1}=0.8$ as shown in Fig. \ref{fig:decision}(b), $p_1$ keeps to be 1 except for $\epsilon_1=0.9$ which is larger than response failure rates of other full nodes.

\begin{figure}[h]
\centering
\subfigure[$\mathbf{\epsilon}_{-1} = 0.2$.]{
\begin{minipage}[t]{0.45\linewidth}
\centering
\includegraphics[width=1\textwidth]{./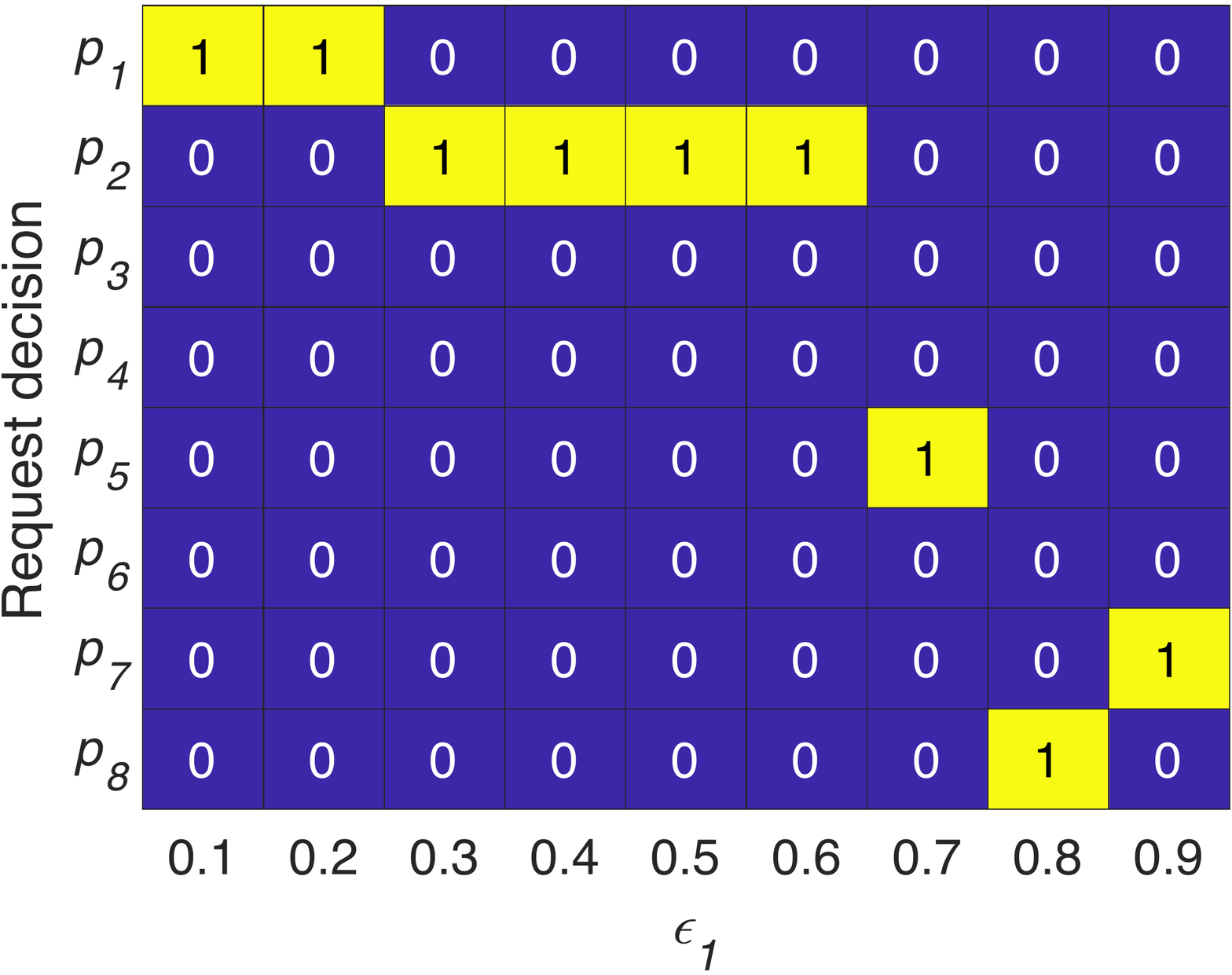}
\end{minipage}
}
\subfigure[$\mathbf{\epsilon}_{-1} = 0.8$.]{
\begin{minipage}[t]{0.45\linewidth}
\centering
\includegraphics[width=1\textwidth]{./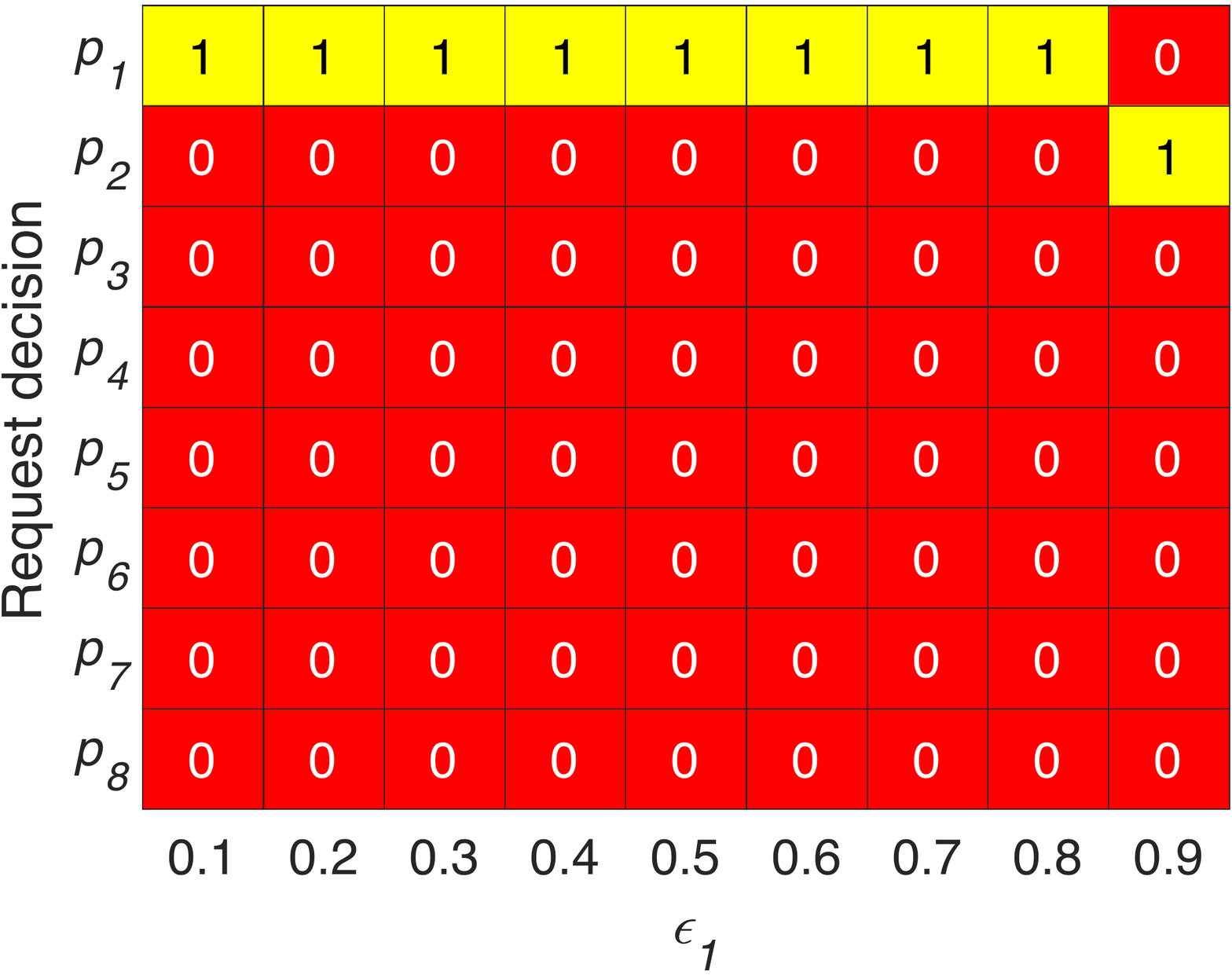}
\end{minipage}
}
\centering
\caption{Synchronization request decision of the partial node.}
\label{fig:decision}
\end{figure}

Finally, we examine the maximized total utility in the NS problem and evaluate its performance under the impacts of the scalar parameter $\alpha_i$ and the cost $C_i$. Here the number of full nodes is still set to $m=8$. The experimental results are reported in Fig. \ref{fig:utility}. From Fig. \ref{fig:utility}(a), one can see that the maximized total utility keeps the same as zero until $\alpha_i=5$, which is because we set $C_i=5$ in this experiment and the profit $\phi_i \in [0,1]$. This means that only when the profit parameter $\alpha_i$ is larger than the cost, can the partial node obtain a positive overall utility. 
While within Fig. \ref{fig:utility}(b), it is shown that the maximized utility first increases with $C_i$ and then decreases when $C_i$ is too large. This is because with a lower $C_i$, the partial node can still obtain a better utility via strategically making the request decision; while when the cost is too high, even the best decision cannot compensate the high resource consumption in request sending process.

\begin{figure}[h]
\centering
\subfigure[Impact of $\alpha_i$.]{
\begin{minipage}[t]{0.45\linewidth}
\centering
\includegraphics[width=1\textwidth]{./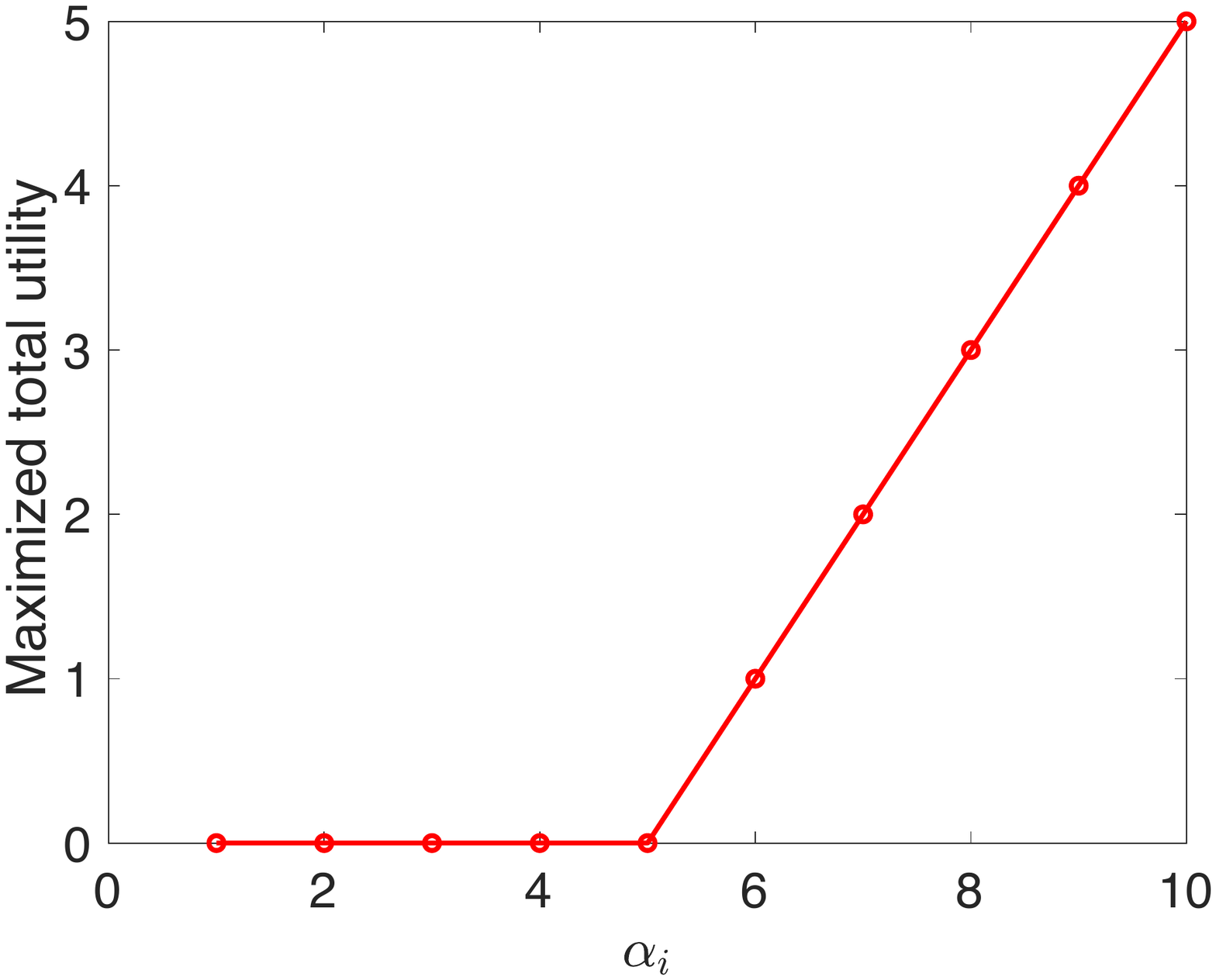}
\end{minipage}
}
\subfigure[Impact of  $C_i$.]{
\begin{minipage}[t]{0.45\linewidth}
\centering
\includegraphics[width=1\textwidth]{./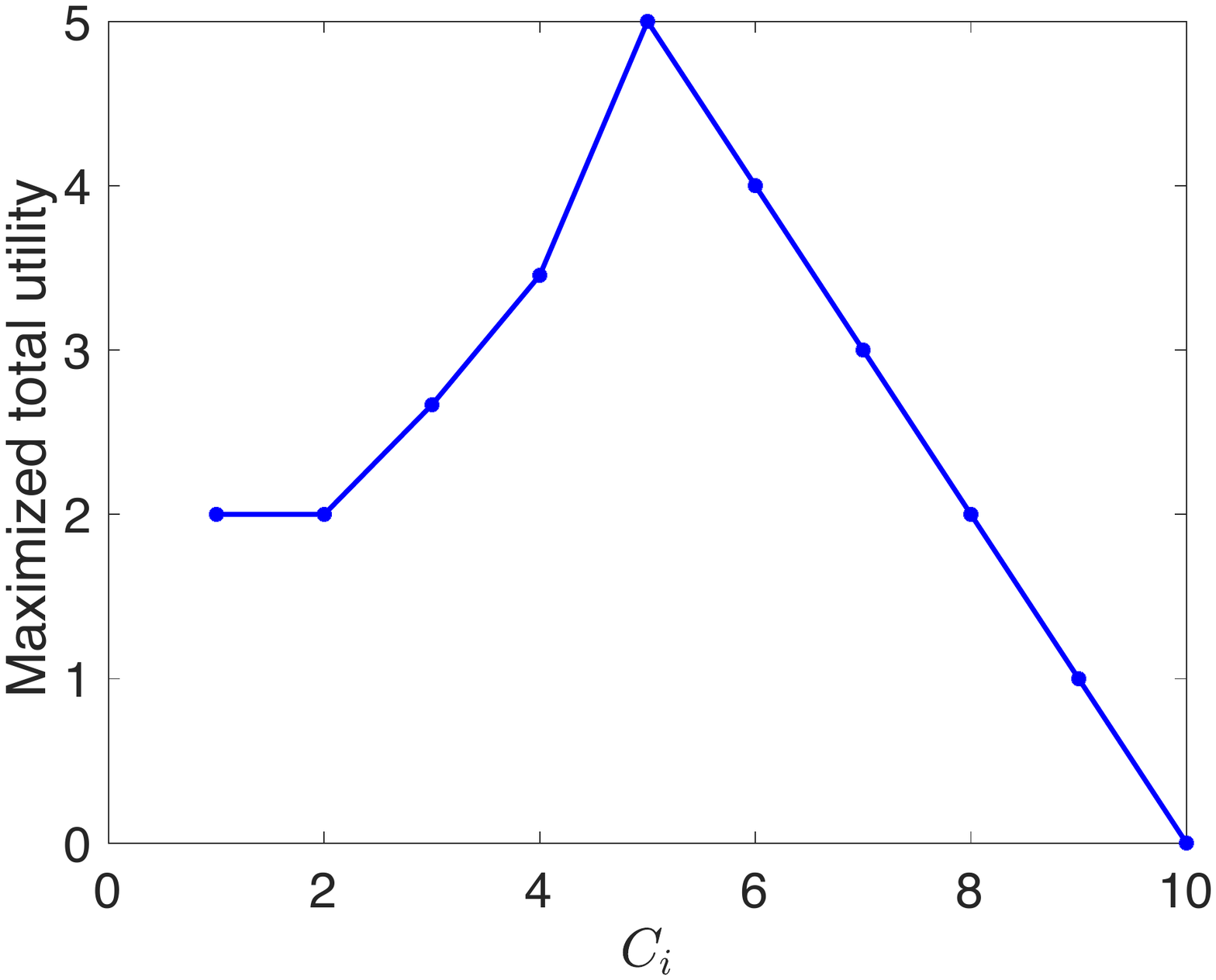}
\end{minipage}
}
\centering
\caption{Maximized total utility of the partial node changing with $\alpha_i$ and $C_i$.}
\label{fig:utility}
\end{figure}

\section{Conclusion}\label{sec:conclusion}
In this paper, we take advantage of the large deviation theory and game theory to study the blockchain synchronization in the pull-based propagation from a microscopic perspective. To be specific, we investigate the fine-grained impacts of peering nodes in synchronizing block information at the node level. On one hand, the full node as the synchronization responder is analyzed based on the queuing model, which reveals the most efficient path to speed up synchronization via increasing the response capacity. On the other hand, the partial node is inspected as the requester, where the best synchronization request scheme is designed using the concept of correlated equilibrium. Extensive experiments are conducted to demonstrate the effectiveness of our analysis.

\bibliographystyle{unsrt}
\bibliography{reference}

\end{document}